\newtheorem{theorem}{Theorem}
\begin{document}%
\title{ {\huge   Impact of User Pairing on 5G Non-Orthogonal Multiple Access }}

\author{ Zhiguo Ding, \IEEEmembership{Member, IEEE},    Pingzhi Fan, \IEEEmembership{Fellow, IEEE},
  and  H. Vincent Poor, \IEEEmembership{Fellow, IEEE}\thanks{
Z. Ding and H. V. Poor  are with the Department of
Electrical Engineering, Princeton University, Princeton, NJ 08544,
USA.   Z. Ding is also with the School of
Computing and Communications, Lancaster
University, LA1 4WA, UK. Pingzhi Fan is with the Institute of Mobile
Communications, Southwest Jiaotong University, Chengdu, China. }\vspace{-0.5em}} \maketitle
\begin{abstract}
 Non-orthogonal multiple access (NOMA) represents  a paradigm shift from conventional orthogonal multiple access (MA) concepts, and has been recognized as one of the key enabling technologies for 5G systems. In this paper, the impact of user pairing on the performance of two NOMA systems, NOMA with fixed power allocation (F-NOMA) and cognitive radio inspired NOMA (CR-NOMA), is characterized.  For F-NOMA, both analytical and numerical results are provided to demonstrate that F-NOMA can offer a larger sum rate than orthogonal MA, and the performance gain of F-NOMA over conventional MA   can be further enlarged by selecting users whose channel conditions are more distinctive. For CR-NOMA, the quality of service (QoS) for   users with the poorer  channel condition can be guaranteed since the transmit power allocated to other users  is constrained following the concept of cognitive radio networks. Because of this constraint, CR-NOMA has   different behavior compared to F-NOMA. For example, for  the user with the best channel condition, CR-NOMA prefers to pair it with the user with the second best channel condition, whereas  the user with the worst   channel condition is preferred by F-NOMA.
\end{abstract}\vspace{-1em}
 \section{Introduction}
 Multiple access in 5G mobile networks is an emerging research topic, since it is   key for the next generation network to keep pace with the exponential growth of mobile data and multimedia traffic \cite{Li5G} and \cite{Huawei5g}. Non-orthogonal multiple access (NOMA) has recently received considerable  attention as a promising candidate for    5G multiple access  \cite{NOMAPIMRC,6933459,Nomading,6708131}. Particularly, NOMA uses the power domain   for    multiple access, where different users are served at different power levels. The users with better channel conditions employ  successive interference cancellation (SIC) to remove the messages intended for  other users before decoding their own \cite{Cover1991}. The benefit of using NOMA can be illustrated by the following example. Consider that there is a user close to the edge of its cell, denoted by $A$, whose channel condition is very poor. For conventional  MA, an  orthogonal bandwidth channel, e.g., a time slot, will be allocated to this user, and the other users cannot use  this time slot.   The key idea of NOMA is to squeeze  another user with better channel condition, denoted by $B$, into this time slot. Since $A$'s channel condition is very poor, the interference from $B$ will not cause much performance degradation to $A$, but the overall system throughput  can be significantly improved since additional  information can be delivered between the base station (BS) and $B$. The design of NOMA for uplink transmissions has been proposed in \cite{6933459}, and the performance of NOMA with randomly deployed mobile stations has been characterized in \cite{Nomading}. The combination of cooperative diversity with NOMA has been considered in \cite{Zhiguo_conoma}.

Since  multiple users are admitted at the same time, frequency and spreading code, co-channel interference will be   strong in NOMA systems, i.e.,  a NOMA system is interference limited. As a result, it may not be realistic to ask all the users in the system to perform  NOMA jointly. A promising alternative is   to build a hybrid MA system, in which    NOMA is combined with conventional MA. In particular,  the users in the system can be divided into multiple groups, where  NOMA is implemented within each group and different groups are allocated with  orthogonal bandwidth resources. Obviously the performance of  this hybrid MA scheme is very  dependent on  which users are   grouped together, and the aim of this paper is to investigate the effect of this grouping.  Particularly, tn this paper, we focus on a downlink communication scenario with one BS and multiple users, where the users are ordered according to their connections to the BS, i.e., the $m$-th user has the $m$-th worst connection to the BS. Consider that two users, the $m$-th user and the $n$-th user,  are selected for performing NOMA jointly, where $m<n$. The impact of user pairing on the performance of NOMA will be characterized in this paper, where two types of NOMA will be considered. One is based on fixed power allocation, termed   F-NOMA, and the other is cognitive radio inspired NOMA, termed   CR-NOMA.

 For the F-NOMA scheme, the probability that F-NOMA can achieve a larger sum rate than conventional MA is first studied, where an exact expression for this probability as well as its high signal-to-noise ratio (SNR) approximation are obtained. These developed analytical results demonstrate that it is almost certain for F-NOMA to outperform conventional MA, and the channel quality of the $n$-th user is critical to this probability.   In addition, the gap between the sum  rates achieved by F-NOMA and conventional MA is also studied, and it is shown  that this gap is determined by how different the two users' channel conditions are, as initially reported in \cite{Zhiguo_conoma}. For example, if $n=M$, it is preferable to choose $m=1$, i.e., pairing the user with the best channel condition with the user with the worst channel condition. The reason for this phenomenon can be explained as follows. When $m$ is small, the $m$-th user's channel condition is poor, and the data rate supported by this user's channel is also small. Therefore the spectral efficiency of conventional MA is low, since the bandwidth   allocated to this user cannot be accessed   by other users. The use of F-NOMA ensures  that  the $n$-th user will  have   access to the resource allocated to the $m$-th user. If $(n-m)$ is small, the $n$-th user's channel quality is similar to the $m$-th user's, and the benefit to use NOMA is limited. But if $n>>m$, the $n$-th user can use the bandwidth resource much  more efficiently than the $m$-th user, i.e.,  a larger $(n-m)$ will result in a larger performance gap between F-NOMA and conventional MA.

 The key idea of CR-NOMA is to opportunistically serve the $n$-th user on the condition that the $m$-th user's quality of service (QoS) is guaranteed. Particularly the transmit power allocated to the $n$-th user is constrained by the $m$-th user's signal-to-interference-noise ratio (SINR), whereas F-NOMA uses a fixed set of power allocation coefficients. Since the $m$-th user's QoS can be guaranteed, we mainly focus on the performance of the $n$-th user offered by CR-NOMA. An  exact expression for the outage probability achieved by CR-NOMA is obtained first, and then  used for the study of the diversity order. In particular, we show that the diversity order experienced by the $n$-th user is $m$, which means that   the $m$-th user's channel quality is critical to the performance of CR-NOMA. This is mainly because of the imposed  SINR constraint, where the $n$-th user can be admitted into the bandwidth channel occupied by the $m$-th user, only if the $m$-th user's SINR is guaranteed. As a result, with a fixed $m$, increasing $n$ does not bring much improvement  to the $n$-th user's outage probability, which is different from F-NOMA. If the ergodic rate is used as the criterion, a similar difference between F-NOMA and CR-NOMA can be observed. Again take the scenario described  in the last paragraph as an example. If $n=M$, in order to yield a large gain over conventional MA, F-NOMA prefers the choice of $m=1$, but CR-NOMA prefers the choice of $m=M-1$ , i.e., pairing the user with the best channel condition with the user with the second best channel condition.

 \section{NOMA With Fixed Power Allocation  }\label{section noma}
Consider a downlink communication scenario with one BS and $M$ mobile users.   Without loss of generality, assume that the users' channels have been ordered as $|h_1|^2\leq \cdots \leq |h_M|^2$, where $h_m$ denotes the Rayleigh fading channel gain between the BS and the ordered $m$-th user.  Consider that  the $m$-th user and the $n$-th user, $m<n$, are paired to perform NOMA.

In this section, we focus on F-NOMA, where the BS allocates a fixed amount of transmit power to each user. In particular, denote $a_m$ and $a_n$ as the  power allocation coefficients for the two users, where these coefficients are fixed  and $a_m^2+a^2_n=1$.  According to the principle of NOMA, $a_m\geq a_n$ since $|h_m|^2\leq |h_n|^2$.    The rates achievable to the two users are given by
\begin{eqnarray}
R_m = \log \left( 1+\frac{|h_m|^2a_m^2}{|h_m|^2a_n^2+\frac{1}{\rho}} \right),
\end{eqnarray}
and
\begin{eqnarray}
R_n = \log \left( 1+ \rho a^2_n |h_n|^2\right),
\end{eqnarray}
respectively, where $\rho$ denotes the transmit SNR. Note that the $n$-th user can decode the message intended for the $m$-th user successfully  and  $R_n$ is always achievable at the $n$-th user,
since $R_m\leq \log \left( 1+\frac{|h_n|^2a_m^2}{|h_n|^2a_n^2+\frac{1}{\rho}} \right)$.

On the other hand, an orthogonal MA scheme, such as time-division multiple-access (TDMA), can support the following data rate:
\begin{eqnarray}
\bar{R}_i = \frac{1}{2}\log \left( 1+ \rho |h_i|^2\right),
\end{eqnarray}
where $i\in\{m,n\}$.  In the following subsections, the impact of user pairing on the sum rate and the individual user rates achieved by F-NOMA is investigated.

\subsection{Impact of user pairing on the sum rate}
In this subsection, we focus on how user pairing affects the probability that NOMA  achieves a lower   sum rate than conventional MA schemes, which is given by
\begin{eqnarray}
\mathrm{P}(R_m+R_n<\bar{R}_m+\bar{R}_n).
\end{eqnarray}
The following theorem provides an exact expression for the above probability as well as its high SNR approximation.

\begin{theorem}\label{lemma1}
Suppose  that the $m$-th and $n$-th ordered users are paired to perform NOMA. The probability that F-NOMA   achieves a lower  sum rate than conventional MA is given by
\begin{align}
&\mathrm{P}(R_m+R_n<\bar{R}_m+\bar{R}_n)=\\\nonumber &1- \sum^{n-1-m}_{i=0}{n-1-m \choose i} \frac{(-1)^i\varpi_1}{m+i} \int^{\varpi_2}_{\varpi_4} f(y)(F(y))^{n-1-m-i} \\\nonumber &\times    \left(1-F(y)\right)^{M-n}  \left(\left[F\left(y\right)\right]^{m+i}-\left[F\left(\frac{\varpi_2-y}{1+y}\right)\right]^{m+i}\right)dy\\ \nonumber &-\frac{\varpi_3}{\rho}\sum^{n-1}_{j=0}{n-1 \choose j}(-1)^j \frac{\rho}{M-n+j+1} e^{-\frac{(M-n+j+1)\varpi_2}{\rho}},
\end{align}
where $f(x)=\frac{1}{\rho}e^{-\frac{x}{\rho}}$, $F(x) = 1-e^{-\frac{x}{\rho}}$,
$\varpi_1=\frac{M!}{(m-1)!(n-1-m)!(M-n)!}$, $\varpi_2=\frac{1-2a_n^2}{a_n^4}$, $\varpi_3=\frac{M!}{(n-1)!(M-n)!}$ and $\varpi_4=\sqrt{1+\varpi_2}-1$.
At high SNR, this probability can be approximated as follows:
\begin{align}
&\mathrm{P}(R_m+R_n<\bar{R}_m+\bar{R}_n) \approx\frac{1}{\rho^n}\left(
   \frac{ \varpi_3\varpi_2^n}{n} - \varpi_1\varpi\right),
\end{align}
where $\varpi=\sum^{n-1-m}_{i=0}{n-1-m \choose i} \frac{(-1)^i}{m+i} \int^{\varpi_2}_{\varpi_4} y^{n-1-m-i}    $ $\times    \left( y^{m+i}-\left[\frac{\varpi_2-y}{(1+y)}\right]^{m+i}\right)dy$, i.e., $\varpi$ is a constant and not a function of $\rho$.
\end{theorem}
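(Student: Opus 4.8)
The plan is to convert the rate event into a region in the plane of the two ordered channel gains and then integrate the joint density of the $m$-th and $n$-th order statistics over that region. First I would simplify the weak user's rate: since $a_m^2+a_n^2=1$, the terms inside $R_m$ combine to give $R_m=\log\frac{1+\rho|h_m|^2}{1+\rho a_n^2|h_m|^2}$. Writing $u=\rho|h_m|^2$ and $v=\rho|h_n|^2$ (so that, for $|h_i|^2$ unit-mean exponential, $u$ and $v$ carry the pdf $f$ and cdf $F$ stated in the theorem), the event $R_m+R_n<\bar R_m+\bar R_n$ becomes, after exponentiating and squaring, $(1+u)(1+a_n^2v)^2<(1+v)(1+a_n^2u)^2$.

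Next I would factor this inequality. Expanding and collecting terms, everything should organize into $(u-v)\big[(1-2a_n^2)-a_n^4(u+v)-a_n^4uv\big]<0$. Because $m<n$ forces $u\le v$, the bracket must be positive, which rearranges to $(1+u)(1+v)<1+\varpi_2$ with $\varpi_2=\frac{1-2a_n^2}{a_n^4}$ --- exactly the constant in the statement. Solving for the weak user's gain gives the threshold $u<\frac{\varpi_2-v}{1+v}$, which is the source of the term $F\!\left(\frac{\varpi_2-y}{1+y}\right)$. Comparing this threshold with the order constraint $u\le v$ locates the crossover: $\frac{\varpi_2-v}{1+v}=v$ solves to $v=\sqrt{1+\varpi_2}-1=\varpi_4$, so for $v\le\varpi_4$ every admissible $u$ satisfies the event, for $\varpi_4\le v\le\varpi_2$ only $u<\frac{\varpi_2-v}{1+v}$ does, and for $v>\varpi_2$ the event is impossible. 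This explains the integration limits $\varpi_4$ and $\varpi_2$ and tells me the probability splits into these three regimes.

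I would then integrate the joint order-statistic density $\varpi_1[F(u)]^{m-1}f(u)[F(v)-F(u)]^{n-1-m}f(v)[1-F(v)]^{M-n}$ over this region. For the inner integral in $u$ I would expand $[F(v)-F(u)]^{n-1-m}$ by the binomial theorem and use $\int_0^a[F(u)]^{k}f(u)\,du=\frac{[F(a)]^{k+1}}{k+1}$, which converts each inner integral into a power of $F$ at the upper limit --- either $F(v)$ (the $[F(y)]^{m+i}$ piece) or $F\!\left(\frac{\varpi_2-v}{1+v}\right)$. The regime $v\le\varpi_4$ collapses, via the Beta-type identity $\sum_i\binom{n-1-m}{i}\frac{(-1)^i}{m+i}=\frac{(m-1)!(n-1-m)!}{(n-1)!}$, to $\int_0^{\varpi_4}\varpi_3[F(v)]^{n-1}f(v)[1-F(v)]^{M-n}\,dv$, i.e.\ the marginal CDF of the $n$-th order statistic, with $\varpi_1$ reducing to $\varpi_3$. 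The ``$1-$'' structure then emerges by writing $\int_0^{\varpi_4}=1-\int_{\varpi_4}^{\varpi_2}-\int_{\varpi_2}^{\infty}$ of this marginal density; the final tail integral evaluates in closed form through $[F(v)]^{n-1}=\sum_j\binom{n-1}{j}(-1)^j[1-F(v)]^j$ together with $1-F(v)=e^{-v/\rho}$, producing precisely the exponential sum term.

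The main obstacle I anticipate is bookkeeping rather than any single hard step: correctly merging the fully-covered regime ($v\le\varpi_4$) with the partially-covered one ($\varpi_4\le v\le\varpi_2$) so that the shared factor $[F(y)]^{n-1}$ combines into the difference $[F(y)]^{m+i}-[F(\frac{\varpi_2-y}{1+y})]^{m+i}$ under a single sum and integral, with signs and the $\frac{1}{m+i}$ weights kept consistent. For the high-SNR approximation I would use $F(v)=1-e^{-v/\rho}\approx v/\rho$ and $f(v)\approx 1/\rho$ on the bounded range $[\varpi_4,\varpi_2]$, so the integral term scales as $\rho^{-n}$ with constant $\varpi$, while for the tail term I would expand the $n$-th order-statistic CDF near zero, whose leading contribution $\binom{M}{n}(\varpi_2/\rho)^n=\frac{\varpi_3\varpi_2^n}{n\rho^n}$ supplies the remaining piece; combining the two gives the stated $\frac{1}{\rho^n}\big(\frac{\varpi_3\varpi_2^n}{n}-\varpi_1\varpi\big)$.
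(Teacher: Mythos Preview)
Your proposal is correct and follows essentially the same route as the paper: reduce the rate comparison to the region $(1+u)(1+v)\lessgtr 1+\varpi_2$ (the paper arrives at the equivalent $u+v+uv\gtrless\varpi_2$), identify the threshold $u=\frac{\varpi_2-v}{1+v}$ with crossover $v=\varpi_4$, integrate the joint order-statistic density with a binomial expansion for the inner integral, and handle the tail $v>\varpi_2$ via the marginal of the $n$-th order statistic. The only organizational difference is that the paper computes the complementary probability $\mathrm{P}(R_m+R_n>\bar R_m+\bar R_n)=Q_1+Q_2$ and subtracts from $1$, whereas you compute the direct probability and recover the ``$1-$'' structure by rewriting $\int_0^{\varpi_4}$ of the marginal; the bookkeeping you flag is exactly the step $-\int_{\varpi_4}^{\varpi_2}f_n+\text{(regime 2)}=-Q_1$, which is immediate once both pieces are written in the same binomial form. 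Your high-SNR argument for the tail term, reading off the leading coefficient $\binom{M}{n}(\varpi_2/\rho)^n=\frac{\varpi_3\varpi_2^n}{n\rho^n}$ directly from the $n$-th order-statistic CDF, is more economical than the paper's route through a full series expansion and the identities $\sum_j\binom{n-1}{j}(-1)^j j^{l}=0$ for $l<n-1$.
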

\begin{proof}
See the appendix.
\end{proof}
Theorem \ref{lemma1} demonstrates that it is almost certain for F-NOMA to outperform conventional MA, particularly at high SNR.    Furthermore, the  decay rate of the probability $\mathrm{P}(R_m+R_n<\bar{R}_m+\bar{R}_n)$ is approximately $\frac{1}{\rho^n}$, i.e., the quality of the $n$-th user's channel determines the decay rate of this  probability.

\subsection{Asymptotic studies of the sum rate achieved by NOMA}
In addition to the probability   $\mathrm{P}(R_m+R_n<\bar{R}_m+\bar{R}_n)$, it is also of interest  to study how large  of a performance gain F-NOMA offers over conventional MA, i.e.,
\[
\mathrm{P}(R_m+R_n-\bar{R}_m-\bar{R}_n<R),
\]
where $R$ is a targeted performance gain. The probability studied in the previous subsection can be viewed as a special case by setting $R=0$. An interesting observation for the cases with $R>0$ is that there will be an error floor for $\mathrm{P}(R_m+R_n-\bar{R}_m-\bar{R}_n<R)$, regardless of how large the SNR is. This can be shown  by studying the following asymptotic expression of the sum rate gap:
\begin{align}
&R_m+R_n-\bar{R}_m-\bar{R}_n\\ \nonumber \underset{\rho\rightarrow \infty}{\rightarrow} &
 \log \left( \frac{1}{a_n^2} \right)+ \log \left(  \rho a^2_n |h_n|^2\right) -  \log \left( \rho|h_m|         |h_n|\right)
 \\ \nonumber =&    \log       |h_n|  -  \log    |h_m|      ,
\end{align}
which is not a function of SNR.
Hence the probability can be  expressed asymptotically as follows:
\begin{align}\label{asym}
&\mathrm{P}\left(R_m+R_n-\bar{R}_m-\bar{R}_n<R\right)\\ \nonumber \underset{\rho\rightarrow \infty}{\rightarrow}  &\mathrm{P}\left(   \log       |h_n|  -  \log    |h_m|    <R \right) .
\end{align}
When $R=0$, $\mathrm{P}\left(R_m+R_n-\bar{R}_m-\bar{R}_n<R\right)\rightarrow 0$, which is consistent with Theorem  \ref{lemma1}, since $$\mathrm{P}\left(R_m+R_n<\bar{R}_m+\bar{R}_n\right)\sim \frac{1}{\rho^n}\underset{\rho\rightarrow \infty}{\rightarrow}0.$$

When $R\neq 0$, \eqref{asym} implies that the probability $\mathrm{P}\left(R_m+R_n-\bar{R}_m-\bar{R}_n<R\right)$ can be  expressed  asymptotically as follows:
\begin{align}
 \mathrm{P}\left(   \log       |h_n|  -  \log    |h_m|    <R \right)   \rightarrow
 \mathrm{P}\left(          \frac{|h_n|^2}{    |h_m|^2}    <2^{2R} \right).
\end{align}
Directly  applying the joint probability density function (pdf) of the users' channels shown in \eqref{joint pdf}, the probability can be rewritten as   follows:
\begin{align}
 &\mathrm{P}\left(   \log       |h_n|  -  \log    |h_m|    <R \right) \\ \nonumber & =
 \int^{\infty}_{0}\int^{y}_{2^{-2R}y}\varpi_1 f(x) f(y)[F(x)]^{m-1} \\\nonumber &\times \left(F(y)-F(x)\right)^{n-1-m} \left(1-F(y)\right)^{M-n}dx dy ,
\end{align}
which is quite complicated to evaluate.
 In \cite{Subranhmaniam}, a simpler pdf for the ratio of two order statistics  has been provided  as follows:
 \begin{align}\nonumber
 f_{ \frac{|h_m|^2}{    |h_n|^2} }(z) = \frac{M!}{(m-1)!(n-m-1)!(M-n)!} \sum^{m-1}_{j_1=0}\sum^{n-m-1}_{j_2=0}\\ \nonumber  (-1)^{j_1+j_2} {m-1 \choose j_1} {n-m-1 \choose j_2} (\tau_2+\tau_1 z)^{-2},
 \end{align}
 where $\tau_1=j_1-j_2+n-m$ and $\tau_2=M-n+1+j_2$. By using this pdf,  the addressed probability can be calculated as follows:
 \begin{align}\label{sum rate gap}
 &\mathrm{P}\left(   \log       |h_n|  -  \log    |h_m|    <R \right) \\ \nonumber & \rightarrow
  \frac{M!}{(m-1)!(n-m-1)!(M-n)!} \sum^{m-1}_{j_1=0}\sum^{n-m-1}_{j_2=0}\\ \nonumber  & \frac{(-1)^{j_1+j_2}}{\tau_1} {m-1 \choose j_1} {n-m-1 \choose j_2} \left(\frac{1}{\tau_2+2^{-2R}\tau_1} - \frac{1}{\tau_2+\tau_1} \right) .
\end{align}

\subsection{Impact of user pairing on individual user rates}
Careful user pairing  not only improves the sum rate, but also has the  potential to improve the individual user rates, as shown in this section.    We first focus on the probability that F-NOMA can achieve a  larger rate  than orthogonal MA for the $m$-th user which is given by
\begin{align}
&\mathrm{P}(R_m>\bar{R}_m)  \\ \nonumber =&   \mathrm{P}\left(  \left( 1+\frac{|h_m|^2a_m^2}{|h_m|^2a_n^2+\frac{1}{\rho}} \right)^2>   (1+\rho|h_m|^2)\right).
\end{align}
After some algebraic manipulations, the above probability can be further rewritten  as follows:
\begin{align}\label{user m}
&\mathrm{P}(R_m>\bar{R}_m) = \mathrm{P}\left(  |h_m|^2<\frac{1-2a_n^2}{\rho a_n^4}\right)\\ \nonumber &=  \int^{\frac{1-2a_n^2}{\rho a_n^4}}_{0}  \frac{\varpi_5 }{\rho}e^{-\frac{(M-m+1)y}{\rho}} \left(1- e^{-\frac{y}{\rho}}\right)^{m-1}  dy
\\ \nonumber &=  \sum^{m-1}_{i=0} {m-1 \choose i} \frac{(-1)^i\varpi_5}{M-m+i+1}\left(1-e^{-\frac{(1-2a_n^2)(M-m+i+1)}{\rho a_n^4}}\right)  , \end{align}
where $\varpi_5=\frac{M!}{(m-1)!(M-m)!}$.

By applying a series  expansion, the above probability can be rewritten as follows:
\begin{align}\label{indvidtual 1}
\mathrm{P}(R_m>\bar{R}_m)
 &=  \sum^{m-1}_{i=0} {m-1 \choose i}  (-1)^{i+1}\varpi_5  \\ \nonumber&\times \sum^{\infty}_{k=1}(-1)^k \frac{(1-2a_n^2)^{k}(M-m+i+1)^{k-1}}{k!\rho^{k} a_n^{4k}}.
  \end{align}
Again applying the results in \eqref{property1} and \eqref{property2}, the above equation can be approximated as follows:
\begin{align}\label{user m high}
\mathrm{P}(R_m>\bar{R}_m)
 &\approx   \varpi_5   \frac{(1-2a_n^2)^{m}}{m\rho^{m} a_n^{4m}},
  \end{align}
  which means that $\mathrm{P}(R_m>\bar{R}_m)$ decays at a rate of $\frac{1}{\rho^m}$.

On the other hand, the probability that the $n$-th user can experience better performance in a NOMA system than in orthogonal MA systems  is given by
\begin{eqnarray}\nonumber
\mathrm{P}(R_n>\bar{R}_n) = \mathrm{P}\left(\log \left( 1+ \rho a^2_n |h_n|^2\right)> \frac{1}{2}\log (1+\rho|h_n|^2 \right).
\end{eqnarray}
Following similar steps as previously, we obtain the following:
\begin{eqnarray} \label{user n}
\mathrm{P}(R_n>\bar{R}_n) = \mathrm{P}\left( |h_n|^2>\frac{1-2a_n^2}{\rho a_n^4}\right).
\end{eqnarray}
Interestingly $\mathrm{P}(R_n>\bar{R}_n)$ in \eqref{user n} is very much similar to $\mathrm{P}(R_m>\bar{R}_m)$ in \eqref{user m}, which yields the following:
\begin{align}\label{individutal 2}
&\mathrm{P}(R_n>\bar{R}_n) = 1-   \sum^{n-1}_{i=0} {n-1 \choose i} \frac{(-1)^i\varpi_3}{M-n+i+1}\\ \nonumber &\times \left(1-e^{-\frac{(1-2a_n^2)(M-n+i+1)}{\rho a_n^4}}\right),  \end{align}
and its high SNR approximation is given by
\begin{align}\label{user n high}
\mathrm{P}(R_n>\bar{R}_n)
 &\approx   1- \varpi_3   \frac{(1-2a_n^2)^{n}}{n\rho^{n} a_n^{4n}}.
  \end{align}
  As can be seen from \eqref{user m high} and \eqref{user n high}, the two users will have totally different experience in NOMA systems. Particularly, a user with a better channel condition is more willing to perform NOMA since   $\mathrm{P}(R_n>\bar{R}_n)\rightarrow 1$, which is not true for a user with a poor channel condition. Furthermore, it is preferable to pair two users whose channel conditions are significantly distinct, since \eqref{user m high} and \eqref{user n high} implies that $m$ should be as small as possible and $n$ should be as large as possible.

\begin{figure*}[bt!]
\begin{align}\label{theorem1}
&\mathrm{P}_{n}^o=\varpi_5  \sum^{M-n}_{i=0}{M-n \choose i}(-1)^i  \frac{\left[G(b)\right]^{m+i}}{m+i} + \sum^{n-1-m}_{i=0}{n-1-m \choose i} (-1)^i\int^{a\epsilon_1}_b g(y) \left(1-G(y)\right)^{M-n}G(y)^{n-1-m-i} \varpi_1 \\ \nonumber &\times \frac{\left(G(y)^{m+i} - G(b)^{m+i}\right)}{m+i}dy +\sum^{n-1-m}_{i=0}{n-1-m \choose i} (-1)^i\int^{b+a\epsilon_1}_{a\epsilon_1} \left(1-G(y)\right)^{M-n}G(y)^{n-1-m-i}\frac{\left(G(y)^{m+i} - G(b)^{m+i}\right)}{m+i}\\\nonumber &\times \varpi_1 g(y) dy +\sum^{n-1-m}_{i=0}{n-1-m \choose i} (-1)^i\int^\infty_{b+a\epsilon_1} g(y) \left(1-G(y)\right)^{M-n} G(y)^{n-1-m-i} \varpi_1  \frac{\left(G\left(\frac{b}{1-\frac{a\epsilon_1}{|h_n|^2}}\right)^{m+i} - G(b)^{m+i}\right)}{m+i}dy.
\end{align}
\end{figure*}
\section{ Cognitive Radio Inspired  NOMA}
NOMA can  be also  viewed as a special case of cognitive radio systems \cite{4840529} and \cite{5403537}, in which a user with a strong channel condition, viewed as a secondary user, is squeezed into  the spectrum occupied  by  a user with a poor channel condition, viewed as a primary user. Following the concept of cognitive radio networks, a variation of NOMA, termed as CR-NOMA, can be designed as follows. Suppose  that  the BS needs to serve  the $m$-th user, i.e., a user a with poor channel condition, due to either the high priority of this user's messages or user fairness, e.g., this user has not been served for a long time.  This user can be viewed as a primary user in a cognitive radio system. The $n$-th user can   be admitted into this   channel on the condition that the $n$-th user will not cause too much performance degradation to the $m$-th user.

Consider that the targeted SINR at the $m$-th user is $I$, which means that the choices of the power allocation coefficients, $a_m$ and $a_n$,  need to satisfy the following constraint:
\begin{align}\label{an dymanic2}
\frac{|h_m|^2a_m^2}{|h_m|^2a_n^2+\frac{1}{\rho}} \geq I.
\end{align}
This means that the maximal transmit power that  can be allocated to the $n$-th user is given by
\begin{align}\label{an dymanic1}
a_n^2= \max \left\{0,\frac{|h_m|^2 -\frac{I}{\rho}}{|h_m|^2(1+I)}\right\},
\end{align}
which means that $a_n=0$ if $|h_m|^2 <\frac{I}{\rho}$. 
Note that the choice of $a_n$ in \eqref{an dymanic1} is a function of the channel coefficient $h_m$, unlike  the constant choice of $a_n$ used by F-NOMA in the previous section.

Since the $m$-th user's QoS can be guaranteed due to \eqref{an dymanic2}, we only need to study  the performance experienced  by the $n$-th user. Particularly  the outage performance of the $n$-th user is defined as follows:
\begin{align}
\mathrm{P}_o^n\triangleq \mathrm{P}\left(\log(1+a_n^2\rho|h_n|^2)<R\right),
\end{align}
and the following theorem provides an exact expression for the above outage probability as well as its approximation.
\begin{theorem}\label{theorm2}
Suppose  that the transmit       power allocated to the $n$-th user can satisfy the predetermined SINR threshold, $I$, as shown in \eqref{an dymanic1}.  The $n$-th user's  outage probability achieved by CR-NOMA   is given by  \eqref{theorem1}, where $g(y)=e^{-y}$, $G(y) = 1-e^{-y}$, $\epsilon_1=\frac{2^R-1}{\rho}$, $b=\frac{I}{\rho}$,  $a=1+I$ and $b\leq a\epsilon_1$. The diversity order achieved by CR-NOMA is given by
\[
\underset{\rho \rightarrow \infty}{\lim}- \frac{\log P_o^n}{\log \rho} = m.
\]

\end{theorem}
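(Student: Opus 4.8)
The plan is to (i) translate the outage event into explicit conditions on the two ordered gains $x\triangleq|h_m|^2$ and $y\triangleq|h_n|^2$, (ii) integrate the joint order-statistic density over the resulting region to assemble \eqref{theorem1}, and (iii) extract the diversity order by a sandwich argument that bypasses the exact expression entirely.

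First I would rewrite the outage event. Since $\log(1+a_n^2\rho y)<R$ is equivalent to $a_n^2 y<\epsilon_1$ with $\epsilon_1=\frac{2^R-1}{\rho}$, and since $a_n^2=0$ whenever $x<b=\frac{I}{\rho}$ while $a_n^2=\frac{x-b}{ax}$ (with $a=1+I$) whenever $x\ge b$, the outage region splits into the certain-outage set $\{x<b\}$ (where the achievable rate is $0<R$) and the conditional set $\{x\ge b,\ (x-b)y<a\epsilon_1 x\}$. For the conditional set, solving for $x$ at fixed $y$ gives the threshold $x<\frac{by}{y-a\epsilon_1}=\frac{b}{1-a\epsilon_1/y}$ when $y>a\epsilon_1$, whereas every admissible $x$ lies in outage when $y\le a\epsilon_1$. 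Comparing this threshold against the ordering constraint $x\le y$ shows that the binding upper limit on $x$ is $y$ for $y\le b+a\epsilon_1$ and $\frac{by}{y-a\epsilon_1}$ for $y>b+a\epsilon_1$; this is precisely the geometry that produces the three $y$-regimes $[b,a\epsilon_1]$, $[a\epsilon_1,b+a\epsilon_1]$, $[b+a\epsilon_1,\infty)$ in \eqref{theorem1}, with the hypothesis $b\le a\epsilon_1$ fixing the ordering of the breakpoints.

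Next I would integrate the joint pdf $\varpi_1 G(x)^{m-1}g(x)\left(G(y)-G(x)\right)^{n-1-m}g(y)\left(1-G(y)\right)^{M-n}$ over each piece. For $\{x<b\}$, carrying out the inner integral over $y\ge x$ via a Beta-type identity collapses it to the marginal CDF of the $m$-th order statistic at $b$, the single closed-form term. For the conditional set, expanding $\left(G(y)-G(x)\right)^{n-1-m}$ by the binomial theorem and using $\int_b^{X_{\mathrm{up}}}G(x)^{m-1+i}g(x)\,dx=\frac{G(X_{\mathrm{up}})^{m+i}-G(b)^{m+i}}{m+i}$ turns each inner integral into the differences $G(y)^{m+i}-G(b)^{m+i}$ (for $y\le b+a\epsilon_1$) and $G\!\left(\frac{b}{1-a\epsilon_1/y}\right)^{m+i}-G(b)^{m+i}$ (for $y>b+a\epsilon_1$), which reassemble into the remaining three integrals. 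I expect this bookkeeping — pinning down the correct inner limits and binomial indices in each regime, and in particular the crossover at $y=b+a\epsilon_1$ where the binding constraint switches from $x\le y$ to $x\le\frac{by}{y-a\epsilon_1}$ — to be the main obstacle; the rest is routine.

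Finally, for the diversity order I would not use \eqref{theorem1} at all, but a two-sided bound. Since $x<b$ forces $a_n^2=0$ and hence outage, $P_o^n\ge\mathrm{P}(|h_m|^2<b)$. Conversely, if a point is in outage with $x\ge b$, then $(x-b)x\le(x-b)y<a\epsilon_1 x$ (using $x\le y$ and $x-b\ge0$) yields $x<b+a\epsilon_1$; together with the case $x<b$, every outage event implies $|h_m|^2<b+a\epsilon_1$, so $P_o^n\le\mathrm{P}(|h_m|^2<b+a\epsilon_1)$. Because $b=\frac{I}{\rho}$ and $a\epsilon_1=\frac{(1+I)(2^R-1)}{\rho}$ both vanish like $\frac{1}{\rho}$, and because $\mathrm{P}(|h_m|^2<c/\rho)=\binom{M}{m}(c/\rho)^m(1+o(1))$ for $M$ i.i.d. unit-mean exponentials, both bounds are $\Theta(\rho^{-m})$. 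Taking $-\log(\cdot)/\log\rho$ and letting $\rho\to\infty$ squeezes the limit to $m$. The conceptual payoff — and the reason the order is $m$ rather than $n$ — is that the dominant outage mechanism is the primary user's channel being too weak to admit user $n$ in the first place, an event governed solely by the $m$-th order statistic.
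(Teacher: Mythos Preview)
Your derivation of the exact expression \eqref{theorem1} follows the paper's proof essentially step for step: the same split into $Q_4=\mathrm{P}(|h_m|^2<b)$ and $Q_3=\mathrm{P}\!\left(\frac{x-b}{ax}y<\epsilon_1,\ x\ge b\right)$, the same casework on $y$ relative to $a\epsilon_1$ and $b+a\epsilon_1$ (driven by whether $\frac{by}{y-a\epsilon_1}$ or $y$ binds on $x$), and the same binomial expansion of $\left(G(y)-G(x)\right)^{n-1-m}$ followed by the elementary inner antiderivative $\frac{G(\cdot)^{m+i}}{m+i}$. Your identification of the $\{x<b\}$ piece with the marginal CDF of the $m$-th order statistic at $b$ is exactly how the paper handles $Q_4$.

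Where you genuinely diverge is in the diversity-order argument. The paper works term by term from \eqref{theorem1}: the first two $y$-integrals are shown to be $\Theta(\rho^{-n})$ by direct Taylor expansion on the bounded range $[b,b+a\epsilon_1]$; the unbounded integral over $y>b+a\epsilon_1$ is first recast by swapping the order of integration, then sandwiched using $e^{-t}\le\frac{1}{1+t}$ to land between $\rho^{-n}$ and $\rho^{-m}$; finally $Q_4\doteq\rho^{-m}$ supplies the dominant piece. Your sandwich $\mathrm{P}(|h_m|^2<b)\le P_o^n\le\mathrm{P}(|h_m|^2<b+a\epsilon_1)$, obtained from the one-line implication $(x-b)x\le(x-b)y<a\epsilon_1 x\Rightarrow x<b+a\epsilon_1$, bypasses all of that machinery and reduces the problem to the asymptotics of a single order statistic. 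This is shorter and more transparent; what the paper's route buys in exchange is finer information, since it actually pins down which of the four summands in \eqref{theorem1} contribute at orders $\rho^{-n}$ versus $\rho^{-m}$, whereas your bound only sees the aggregate.
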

\begin{proof}See the appendix.
\end{proof}

Theorem \ref{theorm2} demonstrates an interesting phenomenon that, in CR-NOMA, the diversity order experienced by the $n$-th user is determined by how good the $m$-th user's channel quality is. This is because the $n$-th user can be admitted to the channel occupied by the $m$-th user only  if the $m$-th user's QoS is met. For example,   if the $m$-th user's channel is poor and its targeted SINR is   high, it is very likely that the BS allocates all the power to the $m$-th user, and the $n$-th user might not even get served.

Recall from the previous section that   F-NOMA can achieve a diversity gain of $n$ for the $n$-th user, and therefore the diversity order achieved by CR-NOMA could be much smaller than that  achieved by F-NOMA, particularly if $n>>m$. This performance difference is again  due to  the imposed power constraint shown in \eqref{an dymanic1}.

  It is important to point out that CR-NOMA can strictly guarantee the $m$-th user's QoS, and therefore achieve better fairness compared to F-NOMA. In particular,  the use of   CR-NOMA can ensure that a diversity order of $m$ is achievable to the $n$-th user, and     admitting the $n$-th user into the same channel as the $m$-th user will not cause too much performance degradation to the $m$-th user. Particularly  the SINR experienced  by the $m$-th user is strictly maintained at the predetermined level $I$.

 \subsection*{Sum rate achieved by CR-NOMA}
Without sharing the spectrum with the $n$-th user, i.e, all the bandwidth resource is allocated to the $m$-th user, the following rate is achievable:
\begin{eqnarray}
\tilde{R}_m =  \log \left( 1+ \rho |h_m|^2\right).
\end{eqnarray}
It is easy to show that the use of CR-NOMA always achieves a larger sum  rate since
\begin{align}
&R_m+R_n-\tilde{R}_m\\\nonumber
=& \log \left( 1+\frac{|h_m|^2a_m^2}{|h_m|^2a_n^2+\frac{1}{\rho}} \right)+\log \left( 1+ \rho a^2_n |h_n|^2\right) \\ \nonumber &-  \log \left(1+\rho |h_m|^2\right)
\\\nonumber
=&\log \frac{ 1+ \rho a^2_n |h_n|^2}{ 1+ \rho a^2_n |h_m|^2}\geq 0.
\end{align}
 This  superior performance gain is not surprising, since the key idea of CR-NOMA is to serve a user with a strong channel condition, without causing too much performance degradation to the user with a poor channel condition.

 In addition,  it is   of interest to study how much the averaged rate gain CR-NOMA can yield, i.e., $\mathcal{E}\left\{R_n  \right\}$. This averaged rate gain can be calculated as follows:
\begin{align}
\mathcal{E}\left\{R_n  \right\} &= \int^{\infty}_{b} \int^{\infty}_{x}\log\left(1+\frac{x -b}{xa}\rho y\right)\\ \nonumber &\times f_{|h_{m}|^2,|h_{n}|^2}(x,y) dydx.
\end{align}
In general, the evaluation of the above equation is difficult, and in the following we provide a case study when  $n-m=1$. Particularly, the joint pdf of the channels for this special case can be simplified and the averaged rate gain can calculated as follows:
\begin{align}\nonumber
\mathcal{E}\left\{R_n  \right\} &= \varpi_1\int^{\infty}_{b}f(x) [F(x)]^{m-1} \int^{\infty}_{x}\log\left(1+\frac{x -b}{xa}\rho y\right)\\   &\times  f(y)  \left(1-F(y)\right)^{M-n}dydx\\ \nonumber
 &= \frac{-\varpi_1}{M-n+1}\int^{\infty}_{b}f(x) [F(x)]^{m-1} \\   \nonumber &\times\int^{\infty}_{x}\log\left(1+\frac{x -b}{xa}\rho y\right) d\left(1-F(y)\right)^{M-n+1}dx.
\end{align}
After some algebraic manipulations, the above equation can be rewritten as follows:
\begin{align}\nonumber
\mathcal{E}\left\{R_n  \right\} &=  \frac{\varpi_1}{M-n+1}\int^{\infty}_{b}f(x) [F(x)]^{m-1} \\   \nonumber &\times\left( \log\left(1+\frac{x -b}{a}\rho \right) \left(1-F(x)\right)^{M-n+1}\right. \\ \nonumber &\left.+\frac{1}{\ln 2}\int^{\infty}_{x}\left(1-F(y)\right)^{M-n+1} \frac{\frac{x -b}{xa}\rho }{ 1+\frac{x -b}{xa}\rho y} dy\right)dx.
\end{align}
Now applying Eq. (3.352.2) in \cite{GRADSHTEYN}, the average rate gain can be expressed as follows:
\begin{align}\label{cr noma rate}
\mathcal{E}\left\{R_n  \right\} &=  \frac{\varpi_1}{M-n+1}\int^{\infty}_{b}f(x) [F(x)]^{m-1} \\   \nonumber &\times\left( \log\left(1+\frac{x -b}{a}\rho \right) \left(1-F(x)\right)^{M-n+1}- \frac{e^{\frac{x^2a}{\rho(x -b)}}}{\ln 2}\right. \\ \nonumber &\left.\times\text{Ei}\left(-(M-n+1)x-\frac{(M-n+1)xa}{\rho(x -b)}\right)\right)dx,
\end{align}
where $\text{Ei}(\cdot)$ denotes the exponential integral.

\begin{figure}[!htp]
\begin{center} \subfigure[ $m=1$ ]{\includegraphics[width=0.47\textwidth]{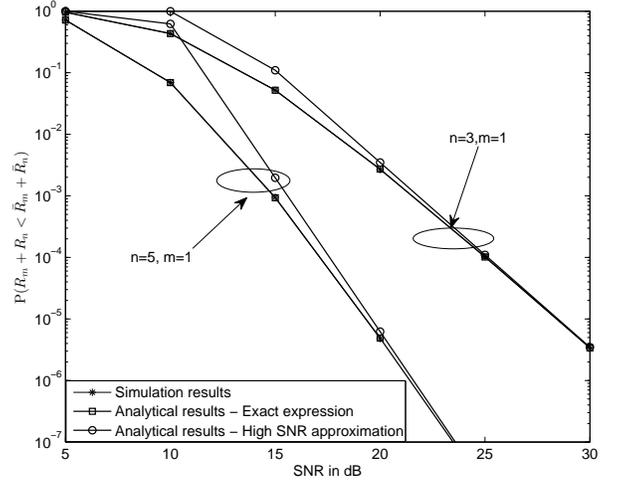}}
\subfigure[$m=2$]{\includegraphics[width=0.47\textwidth]{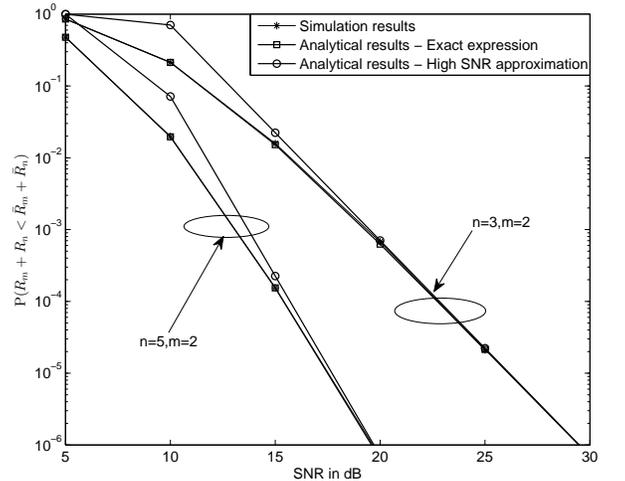}}
\end{center}
 \caption{The probability that F-NOMA realizes  a lower sum rate than conventional MA. $M=5$. The analytical results are based on Theorem \ref{lemma1}.  }\label{fig 1}
\end{figure}

\section{Numerical Studies}
In this section, computer simulations are used to evaluate the performance of two NOMA schemes as well as the accuracy of the developed analytical results.

 \subsection{NOMA with fixed power allocation}
 In Fig. \ref{fig 1}, the probability that F-NOMA realizes a lower sum rate than conventional MA, i.e., $\mathrm{P}(R_m+R_n<\bar{R}_m+\bar{R}_n)$, is shown as a function of SNR. $a_m^2=\frac{4}{5}$ and $a_n^2=\frac{1}{5}$. As can be seen from both figures,    F-NOMA almost always outperforms conventional MA, particularly at high SNR. The simulation results in Fig. \ref{fig 1} also demonstrate the accuracy of the analytical results provided in Theorem \ref{lemma1}. For example, the exact expression of $\mathrm{P}(R_m+R_n<\bar{R}_m+\bar{R}_n)$ shown in Theorem \ref{lemma1} matches  perfectly with the simulation results, whereas the developed approximation results become   accurate at high SNR.

Another important observation from Fig. \ref{fig 1} is that increasing $n$, i.e., scheduling a user with a better channel condition, will make the probability decrease at a faster rate. This observation is consistent to the high SNR  approximation results provided  in Theorem  \ref{lemma1} which show  that  the slope of the curve for the probability  $\mathrm{P}(R_m+R_n<\bar{R}_m+\bar{R}_n)$ is a function of $n$. In Fig. \ref{fig2}, the probability $\mathrm{P}(R_m+R_n-\bar{R}_m-\bar{R}_n<R)$ is shown with different choices of $R$. Comparing Fig. \ref{fig 1} to Fig. \ref{fig2}, one can observe that $\mathrm{P}(R_m+R_n-\bar{R}_m-\bar{R}_n<R)$ never approaches   zero, regardless of how large the SNR is. This observation confirms the analytical results developed in \eqref{sum rate gap} which show that the probability $\mathrm{P}(R_m+R_n-\bar{R}_m-\bar{R}_n<R)$ is no longer a function of SNR, when $\rho\rightarrow 0$. It is interesting to observe that the choice of a smaller $m$ is preferable to reduce $\mathrm{P}(R_m+R_n-\bar{R}_m-\bar{R}_n<R)$, a phenomenon previously reported in \cite{Zhiguo_conoma}.

\begin{figure}[!htbp]\centering
    \epsfig{file=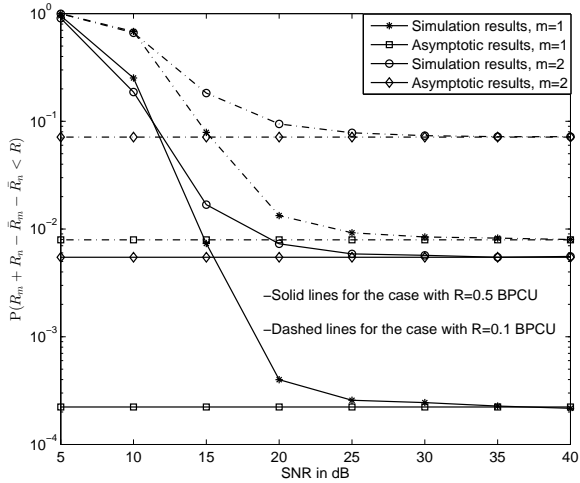, width=0.47\textwidth, clip=}
\caption{ The probability that the sum rate gap between F-NOMA and conventional MA is larger than $R$. $M=5$ and $n=M$. The analytical results are based on \eqref{sum rate gap}.  }\label{fig2}
\end{figure}

In Fig. \ref{fig3}, two different but related probabilities are shown together. One is $\mathrm{P}(R_m>\bar{R}_m)$, i.e., the probability that it is beneficial for the user with a poor channel condition to perform F-NOMA,  and the other is $\mathrm{P}(R_n<\bar{R}_n)$, i.e., the probability that the user with a strong channel condition  prefers conventional MA. In Section \ref{section noma}.C, analytical results have been developed to show that both $\mathrm{P}(R_m>\bar{R}_m)$ and $\mathrm{P}(R_n<\bar{R}_n)$ are decreasing with  increasing SNR, which is confirmed by the simulation results in Fig. \ref{fig3}. The reason that  $\mathrm{P}(R_m>\bar{R}_m)$ is reduced at a higher SNR is that the $m$-th user's rate in an F-NOMA system becomes a constant, i.e., $\log \left( 1+\frac{|h_m|^2a_m^2}{|h_m|^2a_n^2+\frac{1}{\rho}} \right)\underset{\rho\rightarrow \infty}{\rightarrow} \log \left( 1+\frac{ a_m^2}{a_n^2} \right)$, which is much smaller than $\bar{R}_m$, at high SNR. On the other hand,  it is more likely for $R_n$ to be  larger than $\bar{R}_n$ since there is a factor of $\frac{1}{2}$ outside of the logarithm of $\bar{R}_n$.

\begin{figure}[!htbp]\centering
    \epsfig{file=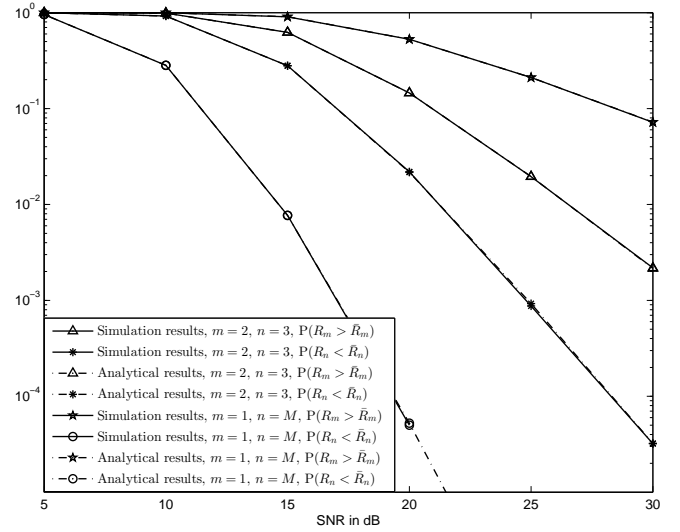, width=0.47\textwidth, clip=}
\caption{ The behavior of individual data rates achieved by F-NOMA, $\mathrm{P}(R_n<\bar{R}_n)$ and $\mathrm{P}(R_m>\bar{R}_m)$. $M=5$.  The analytical results are based on \eqref{indvidtual 1} and \eqref{individutal 2}.  }\label{fig3}
\end{figure}

\subsection{Cognitive radio inspired NOMA}
In Fig. \ref{fig4} the $n$-th user's outage probability achieved by CR-NOMA is shown as a function of SNR. As can be seen from the figure, the exact  expression for the outage probability $\mathrm{P}_n^o\triangleq \mathrm{P}(R_n<R)$ developed in Theorem \ref{theorm2} matches     the simulation results perfectly. Recall from Theorem~\ref{theorm2} that the diversity order achievable for  the $n$-th user is $m$. Or in other words, the slope of the outage probability is determined by the channel quality of the $m$-th user, which is also confirmed by Fig. \ref{fig4}. For example, when increasing $m$ from $1$ to $2$, the outage probability is significantly reduced, and its slope is also increased. To clearly demonstrate the diversity order,   we have provided an auxiliary curve in the figure which  is proportional to $\frac{1}{\rho^m}$.  As can be observed in the figure, this auxiliary curve is parallel to the one for  $ \mathrm{P}(R_n<R)$, which confirms that the diversity order achieved by CR-NOMA is $m$.

\begin{figure}[!htbp]\centering
    \epsfig{file=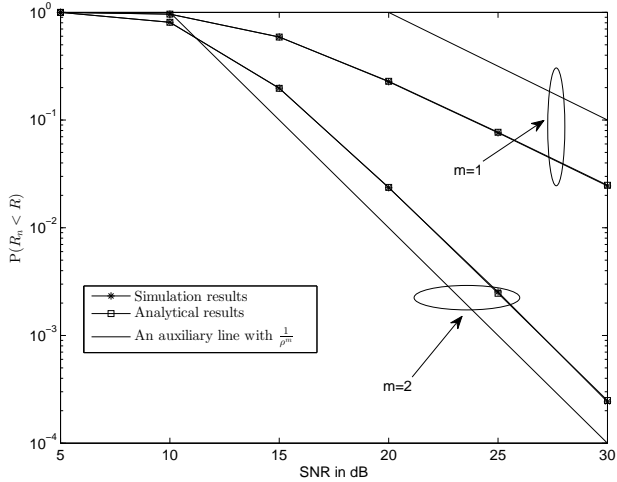, width=0.47\textwidth, clip=}
\caption{ The outage probability for the $n$-th user achieved by CR-NOMA, when $n=M$. $M=5$, $R=1$ bit per channel use (BPCU) and $I=5$. The analytical results are Theorem \ref{theorm2}.  }\label{fig4}
\end{figure}

\begin{figure}[!htbp]\centering
    \epsfig{file=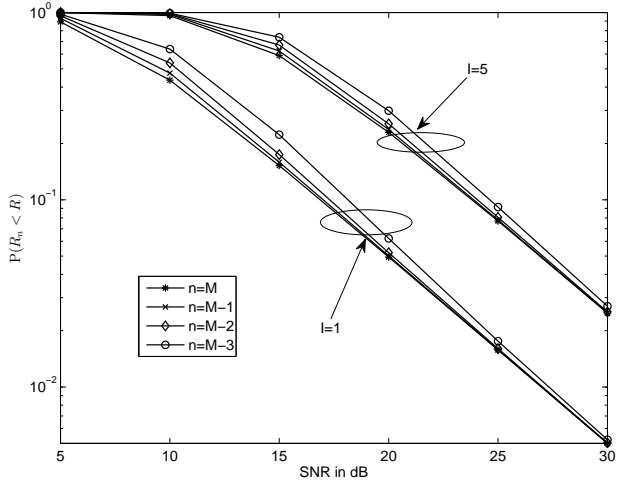, width=0.47\textwidth, clip=}
\caption{ The outage probability for the $n$-th user achieved by CR-NOMA. $m=1$,  $M=5$, and $R=1$ BPCU. }\label{fig5}
\end{figure}
Since Theorem \ref{theorm2} states that the diversity order of $ \mathrm{P}(R_n<R)$ is not a function of $n$, an interesting question is whether a different choice of $n$ matters. Fig. \ref{fig5} is provided to answer this question. While the use of a larger $n$ does bring some reduction of $ \mathrm{P}(R_n<R)$, the performance gain of increasing $n$ is negligible, particularly at high SNR. This is because the channel quality of the $m$-th user becomes a bottleneck for admitting the $n$-th user into the same   channel.

\begin{figure}[!htp]
\begin{center} \subfigure[ $n=m+1$ ]{\includegraphics[width=0.47\textwidth]{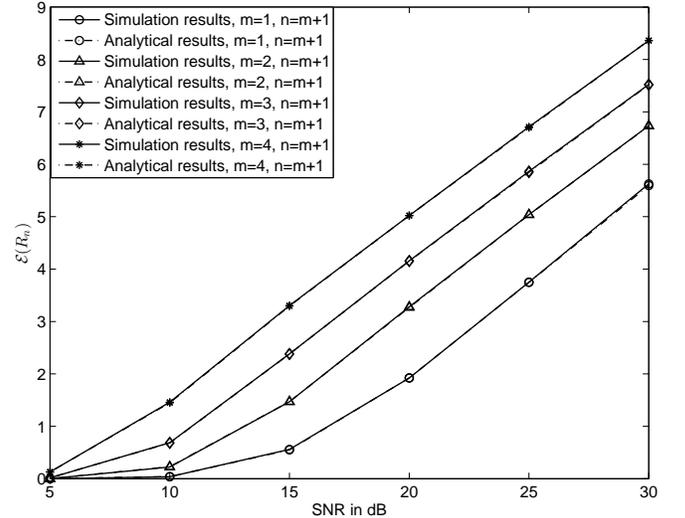}}
\subfigure[General Cases]{\label{fig set comparison
b2}\includegraphics[width=0.47\textwidth]{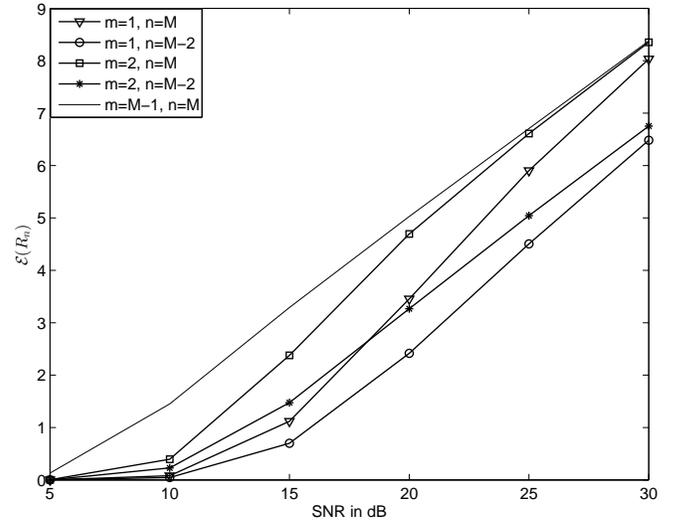}}
\end{center}
  \caption{The ergodic data rate for the $n$-th user achieved by CR-NOMA. $M=5$ and $I=5$. Analytical results are based on \eqref{cr noma rate}.  }\label{fig6}
\end{figure}

In Fig. \ref{fig6} the performance of CR-NOMA is evaluated by using the ergodic data rate as the criterion. Due to the use of \eqref{an dymanic1}, the $m$-th user's QoS can be satisfied, and therefore we only focus on the $n$-th user's data rate, which is the performance gain of CR-NOMA over conventional MA. Fig. \ref{fig6} demonstrates that, by fixing $(n-m)$, it is beneficial to select two users with better channel conditions. While Fig. \ref{fig5} shows that changing  $n$ with a fixed $m$ does not affect the outage probability, Fig. \ref{fig6} demonstrates that user pairing has a significant impact on the ergodic rate. Specifically, when fixing the choice of $m$, pairing it with a user with a better channel condition can yield a gain of more than  $1$ bit per channel use (BPCU)  at $30$dB. Another interesting observation from Fig. \ref{fig6} is that with a fixed $n$, increasing $m$ will improve the performance of CR-NOMA, which is  different from F-NOMA. For example, when $n=M$, Fig. \ref{fig2} shows that the user with the worst channel condition, $m=1$, is the best partner, whereas Fig. \ref{fig6} shows that the user with the second best channel condition, i.e., $m=M-1$, is the best choice.

\section{Conclusions}
In this paper  the impact of user pairing on the performance of two NOMA systems, NOMA with fixed power allocation (F-NOMA) and cognitive radio inspired NOMA (CR-NOMA), has been studied.  For F-NOMA, both analytical and numerical results have been  provided to demonstrate that F-NOMA can offer a larger sum rate than orthogonal MA, and the performance gain of F-NOMA over conventional MA   can be further enlarged by selecting users whose channel conditions are more distinctive. For CR-NOMA, the channel quality of the user with a poor channel condition is critical, since  the transmit power allocated to the other user is constrained following the concept of cognitive radio networks.  One promising future direction of this paper is that the analytical results can be used as criteria  designing  distributed approaches for  dynamic user pairing/grouping.
 \bibliographystyle{IEEEtran}
\bibliography{IEEEfull,trasfer}

\appendix
\textsl{Proof for Theorem \ref{lemma1}:}
Observe that the sum rate achieved by NOMA can be expressed  as follows:
\begin{align}\nonumber
R_m+R_n  = \log \left( \frac{1+\rho|h_m|^2}{\rho|h_m|^2a_n^2+1} \right)  \left( 1+ \rho a^2_n |h_n|^2\right).\end{align}
On the other hand the sum rate achieved by conventional MA is given by
\begin{align}
\bar{R}_m+\bar{R}_n  = \log \left( 1+\rho|h_m|^2  \right)^{\frac{1}{2}}  \left( 1+ \rho   |h_n|^2\right)^{\frac{1}{2}}.\end{align}
Now the addressed probability can be written as follows:
\begin{align}
&\mathrm{P}(R_m+R_n>\bar{R}_m+\bar{R}_n)\\\nonumber &=\mathrm{P}\left(   \left( \frac{1+\rho|h_m|^2}{\rho|h_m|^2a_n^2+1} \right)  \left( 1+ \rho a^2_n |h_n|^2\right)  > \left( 1+\rho|h_m|^2  \right)^{\frac{1}{2}} \right. \\\nonumber &\times \left. \left( 1+ \rho   |h_n|^2\right)^{\frac{1}{2}}\right)\\\nonumber &=\mathrm{P}\left(    \frac{1+\rho|h_m|^2}{(1+\rho a_n^2|h_m|^2)^2}     >  \frac{   1+ \rho   |h_n|^2 }{ \left( 1+ \rho a^2_n |h_n|^2\right)^2}\right).
\end{align}

After some algebraic manipulations, this probability can be rewritten as follows:
\begin{align}
&\mathrm{P}(R_m+R_n>\bar{R}_m+\bar{R}_n)\\\nonumber &=\mathrm{P}\left( \rho   (|h_m|^2+|h_n|^2)+\rho^2   |h_m|^2|h_n|^2>\frac{1-2a_n^2}{a_n^4}\right).
\end{align}
The right-hand  side of the above inequality is non-negative  since the $n$-th user will get less power than the $m$-th user, i.e., $a_n^2\leq \frac{1}{2}$.
Note that the  joint pdf of $\rho|h_m|^2$ and $\rho|h_n|^2$ is given by \cite{David03}
 \begin{align}\nonumber
&f_{|h_m|^2,|h_n|^2}(x,y) = \varpi_1 f(x) f(y)[F(x)]^{m-1}\left(1-F(y)\right)^{M-n}\\\label{joint pdf} &\times \left(F(y)-F(x)\right)^{n-1-m}.
\end{align}
 In addition, the marginal pdf of $|h_n|^2$ is given by
 \begin{align}\label{pdf1}
f_{|h_n|^2}(y) = \varpi_3 f(y) \left(F(y)\right)^{n-1}\left(1-F(y)\right)^{M-n}.
\end{align}

By applying the above density functions,  the addressed probability can be expressed as follows:
\begin{align}\label{QQQQ}
&\mathrm{P}(R_m+R_n>\bar{R}_m+\bar{R}_n)= \underset{Q_2}{\underbrace{{\int_{\varpi_2}^{\infty} } f_{|h_n|^2}(y) dy}}\\\nonumber &+\underset{Q_1}{\underbrace{\underset{    (x+y)+xy>\varpi_2,x<y<\varpi_2}{\int\int} f_{|h_m|^2,|h_n|^2}(x,y) dx dy}}.
\end{align}
Note that the integral range for $x$ in $Q_1$ is $\frac{\varpi_2-y}{1+y}<x<y$, and this range  implies that $\frac{\varpi_2-y}{1+y}<y$, which causes an additional constraint on $y$, i.e., $y>\sqrt{1+\varpi_2}-1$.
 By applying the binomial expansion,  the joint pdf can be further written  as follows:
\begin{align}\nonumber
&f_{|h_m|^2,|h_n|^2}(x,y) = \varpi_1 \sum^{n-1-m}_{i=0}{n-1-m \choose i} (-1)^i f(x) \\\nonumber &\times f(y)[F(x)]^{m-1+i}   \left(1-F(y)\right)^{M-n}(F(y))^{n-1-m-i}  .
\end{align}
Therefore the probability $Q_1$ can now be evaluated as follows:
{\small \begin{align}\nonumber
Q_1&=\varpi_1 \sum^{n-1-m}_{i=0}{n-1-m \choose i} \frac{(-1)^i}{m+i} \int^{\varpi_2}_{\varpi_4} f(y)(F(y))^{n-1-m-i} \\ \label{Q11} &\times    \left(1-F(y)\right)^{M-n}  \left(\left[F\left(y\right)\right]^{m+i}-\left[F\left(\frac{\varpi_2-y}{1+y}\right)
\right]^{m+i}\right)dy.
\end{align}}

On the other hand, $Q_2$ can be calculated as follows:
\begin{align}
Q_2 &=\int_{\varpi_2}^{\infty} \varpi_3 f(y) \left(F(y)\right)^{n-1}\left(1-F(y)\right)^{M-n} dy\\\nonumber
&=\int_{\varpi_2}^{\infty} \varpi_3 \frac{1}{\rho}e^{-\frac{(M-n+1)y}{\rho}} \left(1- e^{-\frac{y}{\rho}}\right)^{n-1}  dy.
\end{align}
By applying the binomial expansion, $Q_2$ can be written as follows:
\begin{align}\label{Q2}
Q_2&= \frac{\varpi_3}{\rho}\sum^{n-1}_{j=0}{n-1 \choose j}(-1)^j \int_{\varpi_2}^{\infty} e^{-\frac{(M-n+j+1)y}{\rho}}    dy
\\\nonumber &= \frac{\varpi_3}{\rho}\sum^{n-1}_{j=0}{n-1 \choose j}(-1)^j \frac{\rho}{M-n+j+1} e^{-\frac{(M-n+j+1)\varpi_2}{\rho}}.
\end{align}

Combining \eqref{Q11} with \eqref{Q2}, the first part of the theorem is proved.

To find high SNR approximations for $Q_1$ and $Q_2$, first observe that the  integral in \eqref{Q11} is calculated for the range of $0\leq y<\varpi_2$. At high SNR, the two functions $f(y)$ and $F(y)$ can be approximated as follows:  $f(y)=\frac{1}{\rho}e^{-\frac{y}{\rho}}\approx \frac{1}{\rho}$ and $F(y) = 1-e^{-\frac{y}{\rho}}\approx \frac{y}{\rho}$, since $0\leq y\leq \varpi_2$ and $\rho\rightarrow \infty$.

Define $u(y) = \frac{\varpi_2-y}{1+y}$. It is straightforward to show $$0\leq u(y)\leq \varpi_2,$$ for $0\leq y\leq \varpi_2$ ,  since $\frac{d g(y)}{d y}<0$. Therefore at high SNR, we can have the following approximation:
\[
F\left(\frac{\varpi_2-y}{1+y}\right) = 1-e^{-\frac{\varpi_2-y}{\rho(1+y)}}\approx \frac{\varpi_2-y}{\rho(1+y)}.
\]

Now the probability   $Q_1$ can be approximated as follows:
\begin{align}\nonumber
Q_1&\approx\varpi_1 \sum^{n-1-m}_{i=0}{n-1-m \choose i} \frac{(-1)^i}{m+i} \int^{\varpi_2}_{\varpi_4} \frac{1}{\rho} \left(\frac{y}{\rho}\right)^{n-1-m-i} \\\nonumber &\times       \left(\left[\frac{y}{\rho}\right]^{m+i}-\left[\frac{\varpi_2-y}{\rho(1+y)}\right]^{m+i}\right)dy\\\nonumber
&\approx\frac{\varpi_1}{\rho^n} \sum^{n-1-m}_{i=0}{n-1-m \choose i} \frac{(-1)^i}{m+i} \int^{\varpi_2}_{\varpi_4} y^{n-1-m-i} \\ \label{app Q1} &\times       \left( y^{m+i}-\left[\frac{\varpi_2-y}{(1+y)}\right]^{m+i}\right)dy.
\end{align}

The high SNR approximation for $Q_2$ is more complicated. After applying the series expansion of the exponential  functions in \eqref{Q2}, we have
\begin{align}
&Q_2 =\sum^{\infty}_{i=0} \frac{\varpi_3}{\rho}\sum^{n-1}_{j=0}{n-1 \choose j} \frac{(-1)^{i+j} \frac{(M-n+j+1)^{i-1}\varpi_2^i}{\rho^{i-1}}}{i!}\\ \nonumber
&=\sum^{\infty}_{i=0} \frac{(-1)^i\varpi_3\varpi_2^i}{i!\rho^i}\sum^{n-1}_{j=0}{n-1 \choose j}(-1)^j  (M-n+j+1)^{i-1}.
\end{align}

Consider $Q_2$ as a function of $\varpi_2$, and $Q_2=1$  is true  for $\varpi_2=0$, as can be seen from the definition  of $Q_2$ in \eqref{QQQQ}. 
On the other hand by letting $\varpi_2=0$ in \eqref{Q2}, we obtain the following equality:
\begin{align}
 {\varpi_3} \sum^{n-1}_{j=0}{n-1 \choose j}(-1)^j \frac{1}{M-n+j+1}  =1.
\end{align}
Consequently $Q_2$ can be rewritten as follows:
\begin{align}\label{Q_23}
Q_2
&=1+\sum^{\infty}_{i=1} \frac{(-1)^i\varpi_3\varpi_2^i}{i!\rho^i}\sum^{n-1}_{j=0}{n-1 \choose j}(-1)^j  \\ \nonumber &\times \sum^{i-1}_{l=0}{i-1 \choose l} (M-n+1)^{i-1-l}j^l.
\end{align}
Recall the following sums of the binomial coefficients (Eq. (0.154.3) in \cite{GRADSHTEYN}):
\begin{align}\label{property1}
\sum^{n-1}_{j=0}{n-1 \choose j}(-1)^j j^l =0,
\end{align}
for $n-2\geq l\geq 1$ and
\begin{align}\label{property2}
\sum^{n-1}_{j=0}{n-1 \choose j}(-1)^j j^{n-1} =(-1)^{n-1}(n-1)!.
\end{align}
Therefore all the components in \eqref{Q_23} containing $j^{l}$, $l< (n-1)$, can be removed, since they are equal to zero by using \eqref{property1}. Furthermore, all the components containing $j^{l}$, $l> (n-1)$ can also be ignored, since the one with $j=n-1$ is the dominant factor. With these steps, the probability can be approximated as follows:
\begin{align}\label{app Q2}
Q_2
&\approx1+  \frac{(-1)^n\varpi_3\varpi_2^n}{n!\rho^n}(-1)^{n-1} (n-1)!\\ \nonumber &=1-  \frac{ \varpi_3\varpi_2^n}{n\rho^n} .
\end{align}
Combining \eqref{app Q1} and \eqref{app Q2}, the second part of the theorem is also proved.
 \hspace{\fill}$\blacksquare$\newline

\textsl{Proof  for Theorem \ref{theorm2}:}

Recall that  the outage performance of the $n$-th user is given by
\begin{align}
&\mathrm{P}\left(\log(1+a_n^2\rho|h_n|^2)<R\right)\\ \nonumber =&\underset{Q_3}{\underbrace{\mathrm{P}\left(\log\left(1+\frac{|h_m|^2 -\frac{I}{\rho}}{|h_m|^2(1+I)}\rho|h_n|^2\right)<R, |h_m|^2 >\frac{I}{\rho}\right) }}\\ \nonumber &+ \underset{Q_4}{\underbrace{\mathrm{P}\left(  |h_m|^2 <\frac{I}{\rho}\right)}}.
\end{align}
The first factor in the above equation can be calculated as follows:
\begin{align}
Q_3=& \mathrm{P}\left( \frac{|h_m|^2 -\frac{I}{\rho}}{|h_m|^2(1+I)} |h_n|^2 <\epsilon_1, |h_m|^2 >\frac{I}{\rho}\right).
\end{align}
  Recall  that the users' channels are ordered, i.e., $|h_m|^2<|h_n|^2$, which brings additional  constraints to the integral range in the above equation.  The constraints can be written  as follows:
\begin{align}
b<|h_m|^2<\min\left\{|h_n|^2, \frac{b}{1-\frac{a\epsilon_1}{|h_n|^2}}\right\}.
\end{align}
  The  outage events due to these constraints  can be classified as follows:
\begin{enumerate}
\item If $|h_n|^2<a\epsilon_1$, we   have the following:
\begin{align}
&\mathrm{P}\left( \frac{|h_m|^2 -\frac{I}{\rho}}{|h_m|^2(1+I)} |h_n|^2 <\epsilon_1\right)\\ \nonumber &=\mathrm{P}\left( |h_m|^2(|h_n|^2 -  \epsilon_1a) <b|h_n|^2\right)=1.
\end{align}
Therefore the probability $Q_3$ can be expressed as follows: \footnote{It is assumed that $b\leq a\epsilon_1$ here. For the case of $b>a\epsilon_1$, the outage probability can be calculated in a straightforward way, since  there will be fewer  events to analyze. Note that  the same diversity order will  be obtained regardless of the choice of $b$ and $a\epsilon_1$.}
$$Q_3=\mathrm{P}\left(b\leq |h_n|^2<a\epsilon_1,|h_n|^2>|h_m|^2>b\right).$$
    \item If $|h_n|^2>a\epsilon_1$, there are two possible events:
\begin{enumerate}
\item If $|h_n|^2>b+a\epsilon_1$, we have $\frac{b}{1-\frac{a\epsilon_1}{|h_n|^2}}<|h_n|^2$, and $Q_3$ can be written as follows:
    \[\hspace{-1.5em}
    Q_3 = \mathrm{P}\left(|h_n|^2>b+a\epsilon_1, b<|h_m|^2<\frac{b}{1-\frac{a\epsilon_1}{|h_n|^2}}\right).
    \]
\item If $|h_n|^2<b+a\epsilon_1$, we have $\frac{b}{1-\frac{a\epsilon_1}{|h_n|^2}}>|h_n|^2$, and $Q_3$ can be written as follows:
    \[\hspace{-1.5em}
    Q_3 = \mathrm{P}\left(a\epsilon_1<|h_n|^2<b+a\epsilon_1, b<|h_m|^2<|h_n|^2\right),
    \] which is again  conditioned on $b<a\epsilon_1$.
\end{enumerate}
\end{enumerate}
Therefore, the probability $Q_3$ can be written as follows:
\begin{align} \label{equ111}
Q_3&= \mathrm{P}\left(b\leq |h_n|^2<a\epsilon_1,|h_n|^2>|h_m|^2>b\right) \\\nonumber &+\mathrm{P}\left(|h_n|^2<b+a\epsilon_1, b<|h_m|^2<|h_n|^2\right)\\ \nonumber  &+ \mathrm{P}\left(|h_n|^2>b+a\epsilon_1, b<|h_m|^2<\frac{b}{1-\frac{a\epsilon_1}{|h_n|^2}}\right).
\end{align}
The first probability in \eqref{equ111} can be   calculated by applying \eqref{pdf1} as  follows:
\begin{align}\nonumber
 & \mathrm{P}\left(b\leq |h_n|^2<a\epsilon_1,|h_n|^2>|h_m|^2>b\right)\\ \nonumber &= \sum^{n-1-m}_{i=0}{n-1-m \choose i} (-1)^i\int^{a\epsilon_1}_{b} g(y) \left(1-G(y)\right)^{M-n}\\\nonumber &\times G(y)^{n-1-m-i}\int^{y}_{b} \varpi_1 g(x) [G(x)]^{m-1+i}dxdy\\ \nonumber &= \sum^{n-1-m}_{i=0}{n-1-m \choose i} (-1)^i\int^{a\epsilon_1}_b g(y) \left(1-G(y)\right)^{M-n}\\\nonumber &\times G(y)^{n-1-m-i} \varpi_1  \frac{\left(G(y)^{m+i} - G(b)^{m+i}\right)}{m+i}dy.
\end{align}
Following similar steps, the second probability in \eqref{equ111} can be expressed as
\begin{align}
&\mathrm{P}\left(a\epsilon_1<|h_n|^2<b+a\epsilon_1, b<|h_m|^2<|h_n|^2\right) \\ \nonumber &= \sum^{n-1-m}_{i=0}{n-1-m \choose i} (-1)^i\int^{b+a\epsilon_1}_{a\epsilon_1} g(y) \left(1-G(y)\right)^{M-n}\\\nonumber &\times G(y)^{n-1-m-i} \varpi_1  \frac{\left(G(y)^{m+i} - G(b)^{m+i}\right)}{m+i}dy.
\end{align}

The third probability in \eqref{equ111} can be calculated  as follows:
\begin{align}\label{eq3}
& \mathrm{P}\left(|h_n|^2>b+a\epsilon_1, b<|h_m|^2<\frac{b}{1-\frac{a\epsilon_1}{|h_n|^2}}\right)\\ \nonumber  &= \sum^{n-1-m}_{i=0}{n-1-m \choose i} (-1)^i\int^\infty_{b+a\epsilon_1} g(y) \left(1-G(y)\right)^{M-n}\\ \nonumber &\times G(y)^{n-1-m-i}\int^{\frac{b}{1-\frac{a\epsilon_1}{|h_n|^2}}}_{b} \varpi_1 g(x) [G(x)]^{m-1+i}dxdy\\ \nonumber &= \sum^{n-1-m}_{i=0}{n-1-m \choose i} (-1)^i\int^\infty_{b+a\epsilon_1} g(y) \left(1-G(y)\right)^{M-n}\\\nonumber &\times G(y)^{n-1-m-i} \varpi_1  \frac{\left(G\left(\frac{b}{1-\frac{a\epsilon_1}{|h_n|^2}}\right)^{m+i} - G(b)^{m+i}\right)}{m+i}dy.
\end{align}
Note that $Q_4$ can be obtained easily by applying \eqref{pdf1} and the first part of the theorem is proved.

Recall that   the first probability in \eqref{equ111} can be expressed  as follows:
\begin{align}\nonumber
 & \mathrm{P}\left(b\leq |h_n|^2<a\epsilon_1,|h_n|^2>|h_m|^2>b\right) \\ \nonumber &= \varpi_1 \sum^{n-1-m}_{i=0}{n-1-m \choose i} (-1)^i\int^{a\epsilon_1}_b g(y) \\\nonumber &\times \left(1-G(y)\right)^{M-n}G(y)^{n-1-m-i} \frac{\left(G(y)^{m+i} - G(b)^{m+i}\right)}{m+i}dy,
\end{align}
where the integral range is $0\leq y\leq (a\epsilon_1)$. Note that when $\rho\rightarrow \infty$,   $\epsilon_1$ approaches    zero, which means $y\rightarrow 0$, $g(y)\approx 1$ and   $G(y)\approx 1-y $. Therefore the above probability can be approximated as follows:
\begin{align}\label{eqx3}
 & \mathrm{P}\left(b\leq |h_n|^2<a\epsilon_1,|h_n|^2>|h_m|^2>b\right) \\ \nonumber &\approx \varpi_1 \sum^{n-1-m}_{i=0}{n-1-m \choose i} (-1)^i\\ \nonumber &\times \int^{a\epsilon_1}_b      y^{n-1-m-i} \frac{\left(y^{m+i} -b^{m+i}\right)}{m+i}dy
 \\ \nonumber &\approx \varpi_1 \sum^{n-1-m}_{i=0}{n-1-m \choose i} (-1)^i\\ \nonumber &\times     \frac{\left(\frac{(a\epsilon_1)^{n}-b^n}{m+i+1} -b^{m+i}\left( \frac{(a\epsilon_1)^{n-m-i} - b^{n-m-i}}{n-m-i}\right)\right)}{m+i}\rightarrow \rho^{-n}. \end{align}

Following similar steps, the second probability in \eqref{equ111} can be approximated as follows:
\begin{align}\label{eqx2}
&\mathrm{P}\left(a\epsilon_1<|h_n|^2<b+a\epsilon_1, b<|h_m|^2<|h_n|^2\right)\rightarrow \rho^{-n}.
\end{align}

The exact diversity order of the third probability in \eqref{equ111} is difficult to obtain. Particularly the expression in \eqref{eq3} is   difficult to use  for asymptotic studies, since the range of $y$ is not limited and those manipulations related to high SNR approximations cannot be applied here. We first rewrite  \eqref{eq3}  in an alternative form as follows:
\begin{align}\nonumber
& \mathrm{P}\left(|h_n|^2>b+a\epsilon_1, b<|h_m|^2<\frac{b}{1-\frac{a\epsilon_1}{|h_n|^2}}\right)\\ \nonumber  &= \mathrm{P}\left( b<|h_m|^2<b+a\epsilon_1, b+a\epsilon_1<|h_n|^2<\frac{|h_m|^2a\epsilon_1}{|h_m|^2-b}\right).
\end{align}
Note that $b+a\epsilon_1< \frac{|h_m|^2a\epsilon_1}{|h_m|^2-b}$ always holds since $|h_m|^2<\frac{b}{1-\frac{a\epsilon_1}{|h_n|^2}}$.

Now applying the joint pdf of the two channel coefficients, we obtain the following expression:
\begin{align}\nonumber
 & \mathrm{P}\left( b<|h_m|^2<b+a\epsilon_1, b+a\epsilon_1<|h_n|^2<\frac{|h_m|^2a\epsilon_1}{|h_m|^2-b}\right)\\\nonumber &= \sum^{n-1-m}_{i=0}\varpi_1{n-1-m \choose i}(-1)^i\int^{b+a\epsilon_1}_{b} g(x) [G(x)]^{m-1+i} \\\nonumber &\times \int^{G\left(\frac{xa\epsilon_1}{x-b}\right)}_{G(b+a\epsilon_1)}  \left[G(y) \right]^{n-1-m-i} \left(1-G(y)\right)^{M-n}dG(y)dx.
\end{align}
Again applying the binomial expansion, the above probability  can be further expanded as follows:
\begin{align}\nonumber
 & \mathrm{P}\left( b<|h_m|^2<b+a\epsilon_1, b+a\epsilon_1<|h_n|^2<\frac{|h_m|^2a\epsilon_1}{|h_m|^2-b}\right)\\\nonumber &= \sum^{n-1-m}_{i=0}\varpi_1{n-1-m \choose i}(-1)^i\int^{b+a\epsilon_1}_{b} g(x) [G(x)]^{m-1+i} \\\nonumber &\times \sum^{M-n}_{j=0}{M-n \choose j}(-1)^j \int^{G\left(\frac{xa\epsilon_1}{x-b}\right)}_{G(b+a\epsilon_1)}  \left[G(y) \right]^{n-1-m-i+j} dG(y)dx
 \\\nonumber &= \sum^{n-1-m}_{i=0}\varpi_1{n-1-m \choose i}(-1)^i\int^{b+a\epsilon_1}_{b} g(x) [G(x)]^{m-1+i} \\\label{eq3-1} &\times \sum^{M-n}_{j=0}{M-n \choose j}\frac{(-1)^j}{n-1-m-i+j} \\ \nonumber &\times  \left(\left[G\left(\frac{xa\epsilon_1}{x-b}\right)\right] ^{n-m-i+j} - \left[G(b+a\epsilon_1)\right]^{n-m-i+j}\right)  dx.
\end{align}
Compared to \eqref{eq3}, the above equation is more complicated; however, this expression is more suitable for asymptotic studies, as explained in the following.

Recall that the integral range in \eqref{eq3-1} is $b<x< b+a\epsilon_1$. When $\rho\rightarrow 0$, we have  $b\rightarrow 0$ and $b+a\epsilon_1\rightarrow 0$, which implies $x\rightarrow 0$. Therefore the following approximation can be obtained:
\begin{align}
 & \mathrm{P}\left( b<|h_m|^2<b+a\epsilon_1, b+a\epsilon_1<|h_n|^2<\frac{|h_m|^2a\epsilon_1}{|h_m|^2-b}\right)\\\nonumber  &\approx \sum^{n-1-m}_{i=0}\varpi_1{n-1-m \choose i}(-1)^i \\\nonumber &\times \sum^{M-n}_{j=0}{M-n \choose j}\frac{(-1)^j}{n-1-m-i+j}\int^{b+a\epsilon_1}_{b}   x^{m-1+i} \\ \nonumber &\times  \left(\left[G\left(\frac{xa\epsilon_1}{x-b}\right)\right] ^{n-m-i+j} - \left[b+a\epsilon_1\right]^{n-m-i+j}\right)  dx.
\end{align}

First focus on the following integral which is from the above equation:
\begin{align} &\int^{b+a\epsilon_1}_{b}   x^{m-1+i}    \left[G\left(\frac{xa\epsilon_1}{x-b}\right)\right] ^{n-m-i+j}    dx
\\\nonumber &\approx \int^{a\epsilon_1}_{0}   (b+z)^{m-1+i}    \left[1- e^{- \frac{ab\epsilon_1}{z} }\right] ^{n-m-i+j}    dz.
\end{align}
We can find   the following bounds for the above integral:
\begin{align}\label{bounds}
 &  \int^{b+a\epsilon_1}_{b}   x^{m-1+i}       dx\\\nonumber &\geq\int^{b+a\epsilon_1}_{b}   x^{m-1+i}    \left[G\left(\frac{xa\epsilon_1}{x-b}\right)\right] ^{n-m-i+j}    dx \\\nonumber &\geq \int^{a\epsilon_1}_{0}   (b+z)^{m-1+i}    \left[1- \frac{1}{1 +  \frac{ab\epsilon_1}{z} }\right] ^{n-m-i+j}    dz,
\end{align}
where the lower bound is obtained due to the inequality  $$e^{- \frac{ab\epsilon_1}{z} }\leq \frac{1}{1 +  \frac{ab\epsilon_1}{z} },$$ when $0\leq z\leq a\epsilon_1$.
The upper bound in \eqref{bounds} can be approximated at high SNR as follows:
\begin{align}\label{uper bounds}
 &  \int^{b+a\epsilon_1}_{b}   x^{m-1+i}       dx\\\nonumber &=\frac{(b+a\epsilon_1)^{m+i} - b^{m+i}}{m+i}\rightarrow \frac{1}{\rho^{m+i}}.
\end{align}
On the other hand, the lower bound in \eqref{bounds} can be approximated as follows:
 \begin{align}
  &  \int^{a\epsilon_1}_{0}   (b+z)^{m-1+i}    \left[1- \frac{1}{1 +  \frac{ab\epsilon_1}{z} }\right] ^{n-m-i+j}    dz \\ \nonumber  &=(ab\epsilon_1)^{n-m-i+j}\int^{a\epsilon_1+ab\epsilon_1}_{ab\epsilon_1}   \frac{(w+b-ab\epsilon_1)^{m-1+i} }{w^{n-m-i+j}  }     dz
      \\ \nonumber &=(ab\epsilon_1)^{n-m-i+j}\sum^{m-1+i}_{k=0}(b-ab\epsilon_1)^{m-1+i-k}
     \\ \nonumber &\times  \int^{a\epsilon_1+ab\epsilon_1}_{ab\epsilon_1}    w^{k-(n-m-i+j)} dz\triangleq \sum^{m-1+i}_{k=0}\xi_k.
\end{align}
At high SNR, we can show that
 \begin{align}
 \xi_k\rightarrow \left\{ \begin{array}{ll} \rho^{-(n+j)}\ln \rho,  &\text{for}\quad k+1=n-m-i+j \\ \rho^{-(n+j)},  &\text{otherwise}   \end{array}\right.\hspace{-1em}.
\end{align}
Since $\frac{\log \log \rho}{\log \rho}\rightarrow 0$ for $\rho\rightarrow \infty$,
the lower bound in \eqref{bounds} can be approximated as follows:
{\small \begin{align}\label{lower bound}
  \int^{a\epsilon_1}_{0}   (b+z)^{m-1+i}    \left[ \frac{ab\epsilon_1}{z +  ab\epsilon_1 }\right] ^{n-m-i+j}    dz\rightarrow \rho^{-(n+j)}.
\end{align}}
Based on the upper and lower bounds in \eqref{uper bounds} and \eqref{lower bound} and after  some algebraic manipulation, we have the following inequality:
{\small \begin{align}\label{eqx1}
\rho^{-n}\dot\leq\mathrm{P}\left(|h_n|^2>b+a\epsilon_1, b<|h_m|^2<\frac{b}{1-\frac{a\epsilon_1}{|h_n|^2}}\right)\dot\leq \rho^{-m},
\end{align}}
where $a\dot\leq b$ denotes $\left(-\frac{\log a}{\log\rho}\right)\leq\left(-\frac{\log b}{\log\rho}\right)$ when $\rho\rightarrow \infty$ \cite{Zhengl03}.

Combining \eqref{eqx3}, \eqref{eqx2} and \eqref{eqx1}, we can obtain the following asymptotic bounds:
\begin{align}
\rho^{-n}\dot\leq Q_2\dot\leq \rho^{-m}.
\end{align}
Following   similar steps as above, we can also find that $Q_3\doteq \rho^{-m}$, which is dominant in $\mathrm{P}_o^n$, and the proof for the second part of the theorem is   completed.
\hspace{\fill}$\blacksquare$\newline
  \end{document}